\newtheorem{corollary}{Corollary}
\newtheorem{lemma}{Lemma}
\newtheorem{proposition}{Proposition}
\newcommand{\myqed}{}
\title{On price-induced minmax matchings}
\author{Christoph Dürr\thanks{CNRS, Sorbonne Université, LIP6, France. Work partially done while C.D.\ was affiliated with CNRS, University of Chile, CMM. Funded by the French National Research Agency grant ANR-19-CE48-0016, the Center for Mathematical Modeling grant ANID FB210005, BASAL funds for center of excellence from ANID-Chile,  Deutsche Forschungsgemeinschaft grant BR 4744/2-1, the European
Research Council (ERC) grant 639945 (ACCORD), and the ERC CoG grant  772346 (TUgbOAT).} 
\and Mathieu Mari\thanks{University of Warsaw and IDEAS-NCBR, Poland} \and Ulrike Schmidt-Kraepelin\thanks{Centro de Modelamiento Matemático, Universidad de Chile, Chile. Work partially done while affiliated with TU Berlin, Germany.}}
\begin{document}
\maketitle

\usetikzlibrary{decorations.pathmorphing}

\begin{abstract}
We study a natural combinatorial pricing problem for sequentially arriving buyers with equal budgets. Each buyer is interested in exactly one pair of items and purchases this pair if and only if, upon arrival, both items are still available and the sum of the item prices does not exceed the budget. The goal of the seller is to set prices to the items such that the number of transactions is maximized when buyers arrive in adversarial order. 

Formally, we are given an undirected graph where vertices represent items and edges represent buyers. Once prices are set to the vertices, edges with a total price exceeding the buyers' budgets are evicted. 
Any arrival order of the buyers leads to a set of transactions that forms a maximal matching in this subgraph, and an adversarial arrival order results in a \emph{minimum} maximal matching. In order to measure the performance of a pricing strategy, we compare the size of such a matching to the size of a maximum matching in the original graph. It was shown by Correa et al.\ [IPCO 2022] that the best ratio  any pricing strategy can guarantee lies within $[1/2, 2/3]$. Our contribution to the problem is two-fold: First, we provide several characterizations of subgraphs that may result from pricing schemes. Second, building upon these, we show an improved upper bound of $3/5$ and a lower bound of $1/2 + 2/n$, where $n$ is the number of items. 
%We improve upon this by showing that the ratio is in $(1/2,5/3]$. In addition we provide several characterisations of sub-graphs resulting from appropriate chosen prices.

 %We study the problem of setting prices to items, so to maximise the number of successful transactions for a specific customer model. Every customer is interested in exactly two specific items and will buy them if and only if they are still available and the total price does not exceed 1 Euro. This is called a \emph{successful transaction}. The arrival order of the customers is unknown at the moment the prices have to be set. The goal is to maximise the number of transactions for the worst case arrival order. 
    
%Formally we are given an undirected graph, vertices representing items and edges representing customers. Once prices are set to vertices, edges with a total price exceeding 1 are evicted.  A set of successful transactions forms a maximal matching. The worst case arrival order generates a minimum maximal matching. Its size is compared to the size of the maximum matching in the original graph. It was shown in Correa et al.\ [IPCO 2022] that the ratio is in $[1/2, 2/3]$. We improve by showing that the ratio is in $(1/2,5/3]$. In addition we provide several characterisations of sub-graphs resulting from appropriate chosen prices.
\end{abstract}

%\marginpar{We might need a good summary of results in pricing to have more references.}

\section{Introduction}
Imagine a seller, who has a set $V$ of items and needs to choose prizes
$p_v\geq 0$ for each $v\in V$. 
There are buyers interested in purchasing exactly two items, provided that the total cost does not exceed their budget and that none of the two items is already sold. We focus on buyers with equal budgets and assume wlog that this budget is $1$. 
%$1$, and that none of the two items is already sold. 
%There are buyers interested in purchasing exactly two items, provided that the total cost does not exceed $1$, and that none of the two items is already sold.
The seller knows the preferred bundles of the buyers, but not the order in which they arrive.
%Hence, customers correspond to edges of a graph, and the set of successful transactions correspond to a maximal matching. The seller knows the graph, but not the orderin which the customers arrive. 
The goal is to choose the prices so to maximize
the number of transactions in the worst case, that is in the worst arrival order of the buyers. 
In a more general setting, this problem has been studied for the case of random buyer valuations in \cite{feldman_combinatorial_2015,aardal_optimal_2022}.
%introduced in \cite{aardal_optimal_2022}, and studied for random buyer valuations.
%A generalization of this problem has been introduced in \cite{aardal_optimal_2022}, and studied for random buyer valuations.

%question: can we come up with a convincing story why buyers would be interested in pairs only? 

The problem can be stated formally in terms of a matching problem. Given a graph $G=(V,E)$, the vertices correspond to the items and the edges to the buyers.\footnote{We remark that, in general, two buyers may request the same pair of items implying that $G$ may have parallel edges. However, since all buyers have equal budget, we can assume without loss of generality that $G$ is simple.} Any feasible set of transactions corresponds to a \emph{matching} in the graph, i.e., a set of edges $M\subseteq E$ such that no two edges share a common endpoint. A matching with the largest cardinality is called \emph{maximum}. A \emph{maximal} matching $M$ is such that it cannot be extended by any additional edge and a maximal matching with minimum cardinality is called a \emph{minmax} matching. The minmax ratio of a graph is the ratio between the cardinalities of a minmax matching and a maximum matching.

It is well known that the minmax ratio of any graph is at least $1/2$, as any maximal matching is at least half of the size of a maximum matching. A maximum matching can be computed in polynomial time, but finding a minmax matching is NP-hard \cite{yannakakis1980edge}, and even hard to approximate within any factor better than $1/2$ under the strong unique game conjecture \cite{koenemann_tight_2021}.

For given prices $p$ for vertices of the graph, the set of edges with total price at most 1 over the endpoints is denoted $E_p=\{uv\in E: p_u + p_v\leq1\}$. We denote by $G_p=(V,E_p)$ the graph restricted to these edges. An edge set $S\subseteq E$ is called \emph{keepable} if there exist prices $p$ such that $S=E_p$. The goal is to choose prices such that the size of a minmax matching in $G_p$ is as large as possible.\footnote{In the auction literature, the objective is typically to maximize the (worst case) sum of collected prices. In our setting, this is essentially equivalent to our objective, since we can enforce that all prices are arbitrarily close to $1/2$.} The performance is measured by comparison with the maximum matching in the original graph $G$. The largest ratio achievable is called the \emph{competitive ratio} of the given graph. Our main goal is to determine the smallest competitive ratio among all graphs, also referred to as the competitive ratio of the problem.
%the smallest competitive ratio among all graphs is called the competitive ratio of the problem. 
%We contribute to this problem as follows.
While Correa et al.\ \cite{aardal_optimal_2022} showed that the competitive ratio is in $[1/2, 2/3]$, we contribute to this problem as follows: 
%In this paper, we show the following: 
\begin{itemize}
  \item In Section \ref{sec:keepable}, we provide several characterizations of keepable edge sets.
  \item In Section \ref{sec:upper}, we improve the upper bound to 3/5.
  \item In Section \ref{sec:P3}, we provide a minor improvement on the lower bound.
  %, showing that the ratio is \emph{strictly} larger than 1/2. 
  Formally, we show that the competitive ratio is at least $1/2 + 2/n$, where $n$ is the number of nodes (items). 
  %Formally we give prices $p$, such that the minmax matching in $G_p$ is strictly larger than half of the maximum matching in the original graph $G$.
\end{itemize}

% \section{Simple observations}

% First we observe that we can assume without loss of generality that the graph has minimum degree 2, and admits a perfect matching.

% (* needs some explanation. which might be already in the IPCO 2022 paper, to be verified. And I don't know if we will ever use it in the paper. *)

\subsection*{Related Work}

\paragraph{Posted-price Mechanisms} 
Our problem falls into the realm of designing posted-price mechanisms for combinatorial auctions. In general, buyers may be interested in buying any subset of items, often referred to as \emph{bundles}. A posted-price mechanism sets prices for the items and buyers pay the sum of the item prices of their allocated bundle. While posted-price mechanisms usually fail to maximize social welfare, they are inherently transparent. Much of the research focuses on the existence of \emph{Walrasian equilibria}, i.e., prices for the items such that every buyer obtains a utility maximizing bundle \cite{arrow1951extension,debreu1951coefficient}. In contrast, we deviate from such an envy-freeness requirement by assuming that agents arrive sequentially. 

The model that is closest to ours was initiated by Feldman et al.\ \cite{feldman2015combinatorial} and recently revisited by Correa et al.\ \cite{aardal_optimal_2022}. Here, buyers arrive in a worst-case order, but the valuations of the buyers over bundles are drawn randomly. Correa et al.\ \cite{aardal_optimal_2022} prove the existence of prices that guarantee an $1/(d+1)$-approximation of the optimal social welfare obtained by any allocation mechanism (not necessarily posted-price), where $d$ is the size of a largest bundle with positive valuation by any buyer. Our setting is a special case of theirs obtained by setting $d=2$ and considering buyers' valuations that are \emph{single-minded} (i.e., have positive valuation for exactly one bundle) and deterministic.

Another related and well-studied problem is the \emph{graph pricing problem}. Similar to our setting, buyers are interested in purchasing bundles of size two, however, a crucial assumption is that there is infinite supply of the items (aka \emph{digital goods}) and buyers may have different budgets. Balcan and Blum \cite{balcan2006approximation} show that there exists an efficient algorithm providing a $4$-approximation to the optimal pricing scheme. 
Later, Lee \cite{lee2015hardness} proves that the problem cannot be approximated with a constant smaller than $4$, assuming the unique games conjecture. 
Friggstad and Mahboub \cite{friggstad2021graph} study the graph pricing problem with limited supply, but deviate from our assumption that buyers arrive in a worst-case order.

\paragraph{Minmax Matchings}
Finding a maximal matching of minimum size (i.e., a minmax matching) is a notoriously hard problem, as witnessed by a long line of research: Yannakakis and Gavril \cite{yannakakis1980edge} showed that the minmax matching problem is NP-hard, even in planar graphs or bipartite graphs of maximum degree 3. Moreover, the problem remains NP-hard for the case of planar bipartite graphs, planar cubic graphs \cite{horton1993minimum}, and $k$-regular bipartite graphs \cite{demange2014hardness}. In contrast, there exists a polynomial-time algorithm for series-parallel graphs \cite{richey1988minimum}. 

Regarding approximation, Gotthilf et al.\ \cite{gotthilf20092} showed that there exists a $2 - c\frac{\log(n)}{n}$ approximation algorithm. However, this is essentially tight since the problem is hard to approximate with a constant smaller than $2$ in general graphs \cite{dudycz2019tight} (assuming the unique games conjecture), and even in bipartite graphs \cite{koenemann_tight_2021} (assuming any of two strengthenings of the unique games conjecture). 

\section{Keepable sets}
\label{sec:keepable}

In this section we provide several characterizations of keepable edge sets. In other words, we ask for which edge sets $S\subseteq E$ exist prices $p$ such that $uv \in S$ if and only if $p(u)+p(v)\leq 1$.

For a given set of edges $S\subseteq E$, an \emph{alternating walk} is a sequence of vertices $v_0,v_1,\ldots,v_{\ell}$ with $v_{\ell}=v_0$, $\ell$ even and such that $(v_i,v_{i+1}) \in E$ for every $i$ and it belongs to $S$ if and only if $i$ is odd. Note that a walk can contain an edge several times, as in Figure~\ref{fig:3-2}. A standard argument shows that if there is an alternating walk, then there exists an alternating walk which contains every edge at most twice (once for each direction of traversal).\footnote{If an alternating walk traverses an edge twice in the same direction, then we can decompose the walk into two alternating walks of smaller size. Restricting repeatedly to one of the walks eventually shows the claim.} Alternating walks containing each edge at most once, are called \emph{alternating cycles}. Our first characterization of keepable edge sets is based on alternating walks.

\begin{figure}[ht]
    \centering
    % \begin{tikzpicture}[main/.style = {draw, circle}]
    \begin{tikzpicture}[main/.style = {}]
        \node[main] (a) at (-1.207,0.5) {$a$};
        \node[main] (f) at (-1.207,-0.5) {$f$};
        \node[main] (b) at (-0.5,0) {$b$};
        \node[main] (c) at (0.5,0) {$c$};
        \node[main] (d) at (1.207,0.5) {$d$};
        \node[main] (e) at (1.207,-0.5) {$e$};
        \draw (a) -- (f);
        \draw (b) -- (c);
        \draw (d) -- (e);
        \draw[dotted]  (a) -- (b) -- (f);
        \draw[dotted]  (d) -- (c) -- (e);
    \end{tikzpicture}
    \caption{The set of edges $\{af,bc,de\}$ is not keepable, because $(a,b,c,d,e,c,b,f,a)$ is an alternating walk with respect to this set. This graph was used in \cite{aardal_optimal_2022} to show that the competitive ratio is at most $2/3$.}
    \label{fig:3-2}
\end{figure}
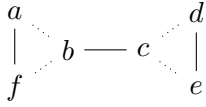

\begin{lemma}   \label{lem:alternating}
    A set $S\subseteq E$ is keepable if and only if there is no alternating walk with respect to~$S$.
\end{lemma}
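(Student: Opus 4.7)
The plan is to handle the two directions separately.

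For the forward direction I would proceed by contradiction. Suppose $S=E_p$ for some prices $p$, and let $v_0,v_1,\ldots,v_\ell=v_0$ be an alternating walk. The $\ell/2$ odd-indexed edges lie in $S$ and the $\ell/2$ even-indexed edges lie in $E\setminus S$, so
\[
\sum_{i\text{ odd}}\bigl(p(v_i)+p(v_{i+1})\bigr)\le \tfrac{\ell}{2}\quad\text{and}\quad \sum_{i\text{ even}}\bigl(p(v_i)+p(v_{i+1})\bigr)>\tfrac{\ell}{2}.
\]
Using $v_\ell=v_0$, each index $0,1,\ldots,\ell-1$ appears exactly once in each left-hand side, so both sums equal $\sum_{i=0}^{\ell-1}p(v_i)$, contradicting the strict inequality.

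For the backward direction I prove the contrapositive via LP duality. If $S$ is not keepable, then the LP
\[
\max\;\delta\quad\text{s.t.}\quad p_u+p_v+\delta\le 1\;\forall uv\in S,\qquad -p_u-p_v+\delta\le -1\;\forall uv\notin S,\qquad p_v,\delta\in\mathbb{R},
\]
has optimum $\le 0$ (keepability is equivalent to the existence of a solution with $\delta>0$; non-negativity of $p$ can be restored at the end by a routine shift and rescale that preserves all inequalities). The primal is always feasible (take $p\equiv 1/2$, $\delta=0$), so by strong duality the dual admits a feasible solution: non-negative weights $\alpha_e$ on $e\in S$ and $\beta_e$ on $e\notin S$ with $\sum_e\alpha_e+\sum_e\beta_e=1$ and, at every vertex $v$, $\sum_{u:uv\in S}\alpha_{uv}=\sum_{u:uv\notin S}\beta_{uv}$.

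Next I would scale $(\alpha,\beta)$ to integers (the LP data is rational) and view it as a multigraph $H$ carrying $\alpha_e$ copies of each $S$-edge and $\beta_e$ copies of each $\bar S$-edge. The vertex equalities then read: at each vertex of $H$, the incident $S$-degree equals the incident $\bar S$-degree. Pairing $S$-incidences with $\bar S$-incidences arbitrarily at every vertex and following the pairings gives, by a standard Eulerian-style argument, a decomposition of $H$ into closed walks alternating $S$-edge, $\bar S$-edge, $S$-edge, \ldots. Each such walk has even length, and after at most one cyclic shift (so that an $\bar S$-edge is traversed first) it fits the definition of an alternating walk. As $H$ has at least one edge, at least one such walk exists.

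The main obstacle is formalizing the Eulerian-style decomposition: namely, verifying that arbitrary local pairings of oppositely coloured incidences at every vertex globally produce closed walks that indeed alternate between $S$ and $\bar S$. Everything else is routine LP duality together with the shift-and-rescale fix for non-negativity of the prices.
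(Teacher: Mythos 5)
Your proof is correct, but it takes a genuinely different route from the paper's. The paper runs a greedy peeling procedure (repeatedly price-and-delete any vertex whose remaining incident edges are all in $S$ or all outside $S$); if the procedure empties the graph the constructed prices work, and if it stalls, every surviving vertex has incident edges of both types, from which a directed-cycle argument extracts an alternating walk. That argument is elementary, constructive in polynomial time, and is what directly powers Corollaries~\ref{cor:order} and~\ref{cor:partition}. Your backward direction instead exhibits the alternating walk as a Farkas/duality certificate: the dual balance condition $\sum_{u:uv\in S}\alpha_{uv}=\sum_{u:uv\notin S}\beta_{uv}$ at each vertex is exactly what makes the pairing of $S$-incidences with $\bar S$-incidences possible, and the resulting transition system decomposes the support multigraph into closed walks that alternate by construction --- the detail you flag is standard: each edge copy has two ends, each end is paired with exactly one end of the opposite type, so the pairing relation on edge copies is $2$-regular and splits into cycles, each of which closes with an $S$/$\bar S$ transition and hence has even length. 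This route costs you strong duality plus a rationality/scaling step and is less constructive, but it makes the walk's role as an infeasibility certificate transparent and is the version most likely to generalize, e.g.\ to non-uniform budgets. Your forward direction also deserves credit: the paper's procedure only establishes the dichotomy ``prices exist or a walk exists,'' and the exclusivity of the two outcomes --- that a keepable set admits no alternating walk --- is precisely your telescoping-sum argument, which the paper leaves implicit. Finally, your shift-and-rescale remark is sound: $p_v\mapsto ap_v+(1-a)/2$ for small $a>0$ preserves the sign of $p_u+p_v-1$ and places all prices near $1/2$.
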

\begin{proof}
    The proof uses a procedure, which, given $S$, will either generate prices $p$ such that $E_p=S$ or will generate an alternating walk with respect to $S$. For this procedure, let $\epsilon>0$ be a sufficiently small constant, e.g., $1/(2|V|)$ would do. Initialise a counter $i=0$.
    Repeat the following steps while possible.
    \begin{itemize}
        \item If there exists a vertex $u$ incident to only edges in $S$, then set $p(u)=i\cdot \epsilon$; increment $i$ by one; remove $u$ and its incident edges from $G$. 
        \item If there exists a vertex $u$ incident to only edges not in $S$, then set $p(u)=1-i\cdot \epsilon$; increment $i$ by one; remove $u$ and its incident edges from $G$. 
    \end{itemize}
    We consider two cases. If the procedure completed with an empty graph, we claim $E_p=S$. Indeed, a vertex $u$ that received the price $p(u)=i\cdot\epsilon$ was connected at that moment to vertices $v$ which eventually received prices strictly smaller than $1-i\cdot\epsilon$, by the increase of $i$. As a consequence $p(u)+p(v)<1$ as required. On the other hand, a vertex $u$ that received the price $1-i\cdot \epsilon$ was connected at that moment only to vertices $v$, which eventually received prices strictly larger than $i \cdot \epsilon$, and hence $p(u)+p(v)>1$ as required.

    In the case the procedure completed with a non-empty graph $G$, we claim that it contains an alternating walk. Construct a directed graph $H$ from $G$, by making a copy $u'$ of every vertex $u$ in $G$, and generating for every edge $(u,v)$ in $G$ the arcs $(u,v'),(v,u')$ if $(u,v)$ belongs to $S$, and the arcs $(u',v),(v',u)$ if $(u,v)$ does not belong to $S$. This directed graph has positive in- and out-degree. By selecting one incoming and one outgoing arc at every vertex, the selected arcs form a collection of directed cycles. Each of them translates into an alternating walk in $G$. \myqed 
\end{proof}

We remark that the symmetric nature of alternating walks implies that $S$ is keepable if and only its complement $V\setminus S$ is keepable. 

\medskip

The previous proof implies implicitly a different characterization of keepable sets. The procedure described in the proof processes vertices in some order. If $S$ is keepable, 
%then every processed vertex has the property that, among the not yet processed vertices, it is either adjacent only to vertices that are connected via an edge in $S$ or it is adjacent only to vertices that are connected to it via an edge outside of $S$.
then every processed vertex $v$ has the following property: Among the not yet processed vertices that are adjacent to $v$, either \emph{all} of them are connected to $v$ via an edge in $S$ or \emph{none} of them is.

%If $S$ is keepable, then every processed vertex has the property that, among the not yet processed vertices, it is either adjacent to only vertices in $S$ or only to vertices outside of $S$.

\begin{corollary}   \label{cor:order}
    A set $S$ is keepable if and only if there is a total order $\prec$ on the vertices and a subset $R\subseteq V$ such that every edge $uv\in E$ with $u\prec v$ belongs to $S$ if and only if $u\in R$.
\end{corollary}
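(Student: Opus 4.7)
The plan is to combine the constructive procedure in the proof of Lemma~\ref{lem:alternating} with a direct price construction for the converse.

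For the ``only if'' direction, suppose $S$ is keepable. By Lemma~\ref{lem:alternating} there is no alternating walk with respect to $S$, so the procedure used in that proof must terminate on an empty graph. I would define $\prec$ to be the order in which vertices are removed by the procedure, and let $R$ consist of those vertices that were assigned a ``low'' price of the form $i\cdot\epsilon$ (rather than $1 - i\cdot\epsilon$). The key observation, already flagged in the paragraph preceding the corollary, is that when the procedure processes a vertex $u \in R$, \emph{all} of its not-yet-removed neighbors are connected to $u$ via edges of $S$, whereas if $u \notin R$ then \emph{none} of them are. Therefore, for every edge $uv$ with $u \prec v$, we get $uv \in S$ if and only if $u \in R$, which is exactly the structure asserted by the corollary.

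For the ``if'' direction, I would build prices directly from $\prec$ and $R$. Enumerating the vertices as $v_1 \prec v_2 \prec \cdots \prec v_n$ and picking $\epsilon = 1/(2n)$, set $p(v_i) = i\cdot\epsilon$ if $v_i \in R$ and $p(v_i) = 1 - i\cdot\epsilon$ otherwise. For each edge $v_iv_j$ with $i < j$, a short case analysis on whether $v_i$ and $v_j$ lie in $R$ shows that $p(v_i) + p(v_j) \leq 1$ precisely when $v_i \in R$: the two ``low'' cases ($v_i \in R$) yield sums of the form $(i+j)\epsilon$ or $1 - (j-i)\epsilon$, both strictly less than $1$, while the two ``high'' cases ($v_i \notin R$) yield $1 + (j-i)\epsilon$ or $2 - (i+j)\epsilon$, both strictly greater than $1$. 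Combined with the hypothesis relating $S$ to $\prec$ and $R$, this gives $E_p = S$.

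The main obstacle, though modest, is the bookkeeping between the two directions: one needs to verify that the order and the subset $R$ extracted from the procedure in Lemma~\ref{lem:alternating} really produce the structure asserted in the corollary, and that $\epsilon$ is small enough for all four case-inequalities in the converse to go through strictly. Beyond that, the proof is essentially a direct reading of the procedure, so I do not expect any conceptual hurdle.
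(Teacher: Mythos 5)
Your proposal is correct and follows essentially the same route as the paper, which derives the corollary directly from the vertex-removal procedure in the proof of Lemma~\ref{lem:alternating}: the removal order gives $\prec$ and the low-priced vertices give $R$, while the converse is witnessed by the very prices $i\cdot\epsilon$ and $1-i\cdot\epsilon$ that the procedure assigns. Your explicit four-case check of the price sums is just a spelled-out version of the argument already present in that proof.
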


By grouping vertices along the total order, alternating between vertices in $R$ and outside of $R$, we directly obtain yet another characterization of keepable sets. See Figure~\ref{fig:partition} for illustration.

\begin{corollary}   \label{cor:partition}
    A set $S$ is keepable if and only if there is an ordered partition of the vertices into sets $A_1, A_2, \ldots, A_\ell$, such that an edge $uv$ is in $S$ if and only if $\min\{i,j\}$ is odd, where $i,j$ are such that $u\in A_i, v\in A_j$. The order of the sets is important, hence the name \emph{ordered} partition. For convenience we assume $\ell$ even, and we allow some of the sets to be empty.
\end{corollary}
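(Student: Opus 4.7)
The plan is to obtain Corollary~\ref{cor:partition} as a direct reformulation of Corollary~\ref{cor:order}: the data ``total order $\prec$ together with subset $R\subseteq V$'' and the data ``alternating ordered partition'' encode the same information, so both characterize keepability.

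For the forward direction, I would start from a pair $(\prec,R)$ guaranteed by Corollary~\ref{cor:order} and list the vertices as $v_1\prec v_2\prec \cdots \prec v_n$. Then I coalesce maximal runs of consecutive vertices that share the same membership status with respect to $R$. This yields an ordered partition $B_1,B_2,\ldots,B_k$ alternating between subsets of $R$ and subsets of $V\setminus R$. A small amount of bookkeeping handles the normalization: if $B_1\not\subseteq R$, I prepend an empty class so that $A_1\subseteq R$; and if the resulting number of classes is odd, I append a further empty class. The outcome is an ordered partition $A_1,\ldots,A_\ell$ with $\ell$ even and $A_i\subseteq R$ iff $i$ is odd. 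For any edge $uv\in E$ with $u\in A_i$, $v\in A_j$, and (w.l.o.g.) $u\prec v$, we have $i\le j$, and Corollary~\ref{cor:order} gives $uv\in S\iff u\in R \iff i\text{ is odd}\iff \min\{i,j\}\text{ is odd}$, as required.

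For the backward direction, given the ordered partition $A_1,\ldots,A_\ell$, I would define $R:=\bigcup_{i\text{ odd}}A_i$ and let $\prec$ be any linear extension of the partition order (with an arbitrary tiebreak inside each $A_i$). For any edge $uv\in E$ with $u\prec v$, $u\in A_i$, $v\in A_j$, we again have $i\le j$, so $\min\{i,j\}=i$, and the hypothesis gives $uv\in S\iff i$ is odd $\iff u\in R$. By Corollary~\ref{cor:order}, $S$ is therefore keepable.

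I do not anticipate a real obstacle: the statement is essentially a repackaging of Corollary~\ref{cor:order}, and the only point requiring any care is the cosmetic convention that $\ell$ is even and that odd-indexed classes are exactly those contained in $R$, which is cleanly resolved by allowing empty classes at the two boundary positions.
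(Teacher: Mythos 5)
Your proof is correct and follows exactly the route the paper intends: the paper derives Corollary~\ref{cor:partition} from Corollary~\ref{cor:order} by the same grouping of consecutive vertices along $\prec$ according to membership in $R$, with empty classes absorbing the parity and boundary conventions. Your write-up simply makes explicit the bookkeeping that the paper leaves as a one-line remark.
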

In some cases we can simplify the ordered partition in such a way that it defines the same keepable set. Suppose that for the partition $A_1, A_2, \ldots, A_\ell$, we have $A_i=\emptyset$ for some $1<i<\ell$. Then replacing the sets $A_{i-1},A_i,A_{i+1}$ by the single set $A_{i-1}\cup A_{i+1}$ preserves the keepable set it defines.

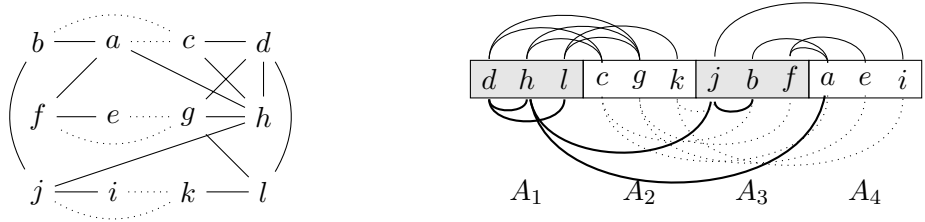
\begin{figure}
    \centering
\begin{tikzpicture}[every circle node/.style={draw}]
\foreach \r/\c/\x [count=\i] in {2/3/a, 1/3/b, 3/3/c, 4/3/d, 2/2/e, 1/2/f, 3/2/g, 4/2/h, 2/1/i, 1/1/j, 3/1/k, 4/1/l}{
	% \node[circle, fill=white] (u\i) at (\r,\c) {\x};
	\node (u\i) at (\r,\c) {$\x$};
}
\draw (u1) -- (u2);
\draw (u3) -- (u4);
\draw (u5) -- (u6);
\draw (u1) -- (u6);
\draw (u1) -- (u8) -- (u4) -- (u7) -- (u8) -- (u3);
\draw (u8) -- (u10) -- (u9);
\draw (u11) -- (u12) -- (u7);
\draw[dotted] (u2)  to [bend left] (u3);
\draw (u2)  to [bend right] (u10);
\draw (u12)  to [bend right] (u4);
\draw[dotted] (u1) -- (u3);
\draw[dotted] (u5) -- (u7);
\draw[dotted] (u9) -- (u11);
\draw[dotted] (u7)  to [bend left] (u6);
\draw[dotted] (u11)  to [bend left] (u10);
\draw[fill=gray!20] (6.75,2.25) rectangle  (8.25,2.75) ;
\draw (8.25,2.25) rectangle  (9.75,2.75);
\draw[fill=gray!20] (9.75,2.25) rectangle (11.25,2.75);
\draw (11.25,2.25) rectangle (12.75,2.75);
\foreach \p [count=\i] in {7.5, 9, 10.5, 12}{
    \node at (\p, 1) {$A_\i$};
}
\foreach \r/\x [count=\i] in {11.5/a, 10.5/b, 8.5/c, 7/d, 12/e, 11/f, 9/g, 7.5/h, 12.5/i, 10/j, 9.5/k, 8/l}{
	\node (v\i) at (\r,2.5) {$\x$};
}
\draw (v1) to [in=90,out=90] (v2);
\draw (v3) to [in=90,out=90] (v4);
\draw (v5) to [in=90,out=90] (v6);
\draw (v1) to [in=90,out=90] (v6);
\draw (v4) to [in=90,out=90] (v7);
\draw (v7) to [in=90,out=90] (v8);
\draw (v8) to [in=90,out=90] (v3);
\draw (v10) to [in=90,out=90] (v9);
\draw (v11) to [in=90,out=90] (v12);
\draw (v12) to [in=90,out=90] (v7);
\draw[dotted] (v2) to [in=-90,out=-90] (v3);
\draw[dotted] (v1) to [in=-90,out=-90] (v3);
\draw[dotted] (v5) to [in=-90,out=-90] (v7);
\draw[dotted] (v9) to [in=-90,out=-90] (v11);
\draw[dotted] (v7) to [in=-90,out=-90] (v6);
\draw[dotted] (v11) to [in=-90,out=-90] (v10);
\draw[thick] (v4) to [in=-90,out=-90] (v8);
\draw[thick] (v4) to [in=-90,out=-90] (v12);
\draw[thick] (v2) to [in=-90,out=-90] (v10);
\draw[thick] (v8) to [in=-100,out=-80] (v1);
\draw[thick] (v8) to [in=-100,out=-80] (v10);
\end{tikzpicture}
\caption{The set of solid edges is keepable as testified by the ordered partition on the right, using Corollary~\ref{cor:partition}. Odd indexed sets are depicted in grey.}
%  Dotted edges are evicted, thick edges definitely kept and thin edged temporarily kept.
    \label{fig:partition}
\end{figure}

The algorithm that proves our lower bound (Section \ref{sec:P3}) 
%In some situations it might be interesting to 
constructs a set of keepable edges, by starting with all edges and evicting iteratively more and more edges. Such a construction can be done using the previous corollary.

We start by defining the following dynamic labeling of edges. An edge can be labeled \emph{evicted}, \emph{definitely kept} or \emph{temporarily kept}. We define an action, called \emph{refinement} by a given vertex set $B$, which changes the edge labels. Initially all edges are temporarily kept. A refinement by a set $B$ changes the edge labels as follows.
\begin{itemize}
    \item temporarily kept edges $uv$ with both endpoints in $B$ are evicted,
    \item temporarily kept edges $uv$ with no endpoints in $B$ are definitely kept,
    \item the labels of all other edges are unchanged.
\end{itemize}
Note that for given vertex sets, the edge labels depend on the order in which the refinement is done. For example, if an edge has both endpoints in $B$ and none in $C$, then refining by $B$ and then by $C$ will evict the edge. However a refinement by $C$ and then by $B$ will definitely keep the edge.

\begin{figure}[htb]
\begin{center}
\begin{tikzpicture}[every circle node/.style={draw},scale=0.6]
\draw[fill=gray!20,rounded corners=5] (0,1) rectangle  (2,6);
\draw[rounded corners=5] (3,1) rectangle  (5,6);
\draw[fill=gray!20,rounded corners=5] (8,3.5) rectangle  (10,6);
\draw[rounded corners=5] (11,1) rectangle  (13,3.5);
\draw[fill=gray!20,rounded corners=5] (14,1) rectangle  (16,3.5);
\draw[rounded corners=5] (17,3.5) rectangle  (19,6);
\draw[rounded corners=10] (-1,1.5) rectangle (6,3.5);
\draw (5.5, 2.5) node {$B$};
\node (a0) at (1,5) {$a$};
\node (b0) at (4,5) {$b$};
\node (c0) at (1,4) {$c$};
\node (d0) at (4,3) {$d$};
\node (e0) at (1,3) {$e$};
\node (f0) at (4,4) {$f$};
\node (g0) at (1,2) {$g$};
\node (h0) at (4,2) {$h$};
\draw (a0) -- (b0);
\draw (c0) -- (d0);
\draw (e0) -- (f0);
\draw (g0) -- (h0);
\node (a1) at (9,5) {$a$};
\node (b1) at (18,5) {$b$};
\node (c1) at (9,4) {$c$};
\node (d1) at (12,3) {$d$};
\node (e1) at (15,3) {$e$};
\node (f1) at (18,4) {$f$};
\node (g1) at (15,2) {$g$};
\node (h1) at (12,2) {$h$};
\draw[ultra thick] (a1) -- (b1);
\draw (c1) -- (d1);
\draw (e1) -- (f1);
\draw[dotted] (g1) -- (h1);
\node at (1,0.5) {$A_{2i-1}$};
\node at (4,0.5) {$A_{2i}$};
\node at (9,0.5) {$A_{2i-1}\setminus B$};
\node at (12,0.5) {$A_{2i}\cap B$};
\node at (15,0.5) {$A_{2i-1}\cap B$};
\node at (18,0.5) {$A_{2i}\setminus B$};
\end{tikzpicture}
\end{center}
\caption{After a refinement by set $B$, the two sets of the left are replaced by the four sets on the right. Odd indexed sets have gray background. The thick edge is definitively kept, the dotted edge is evicted and all other depicted edges are temporarily kept.}
\label{fig:refinement}
\end{figure}
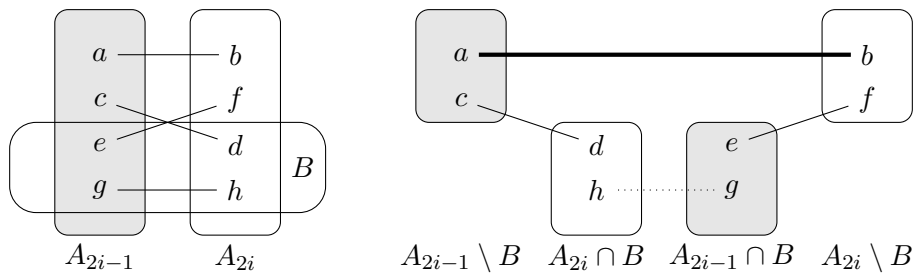

\begin{lemma}\label{lem:refinement}
    An edge set $S$ is keepable if and only if there exists a sequence of vertex sets $B_1,\ldots,B_b$, such that after refinements by these sets, $S$ consists of all temporarily and definitely kept edges.
\end{lemma}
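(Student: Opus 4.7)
The plan is to prove both directions by explicit construction. For the forward direction (keepable $\Rightarrow$ refinable), I apply Corollary~\ref{cor:partition} to obtain an ordered partition $A_1, \ldots, A_{2m}$ realizing $S$, and then apply, in order, the $m+1$ refinements
\[
B_0 := \bigcup_{k=1}^m A_{2k}, \qquad B_k := A_{2k-1} \cup \bigcup_{j=1}^{k-1} A_{2j} \quad\text{for } k = m, m-1, \ldots, 1.
\]
The first refinement $B_0$ performs a ``parity split'': edges inside odd-indexed sets (or between two such) become definitively kept, edges inside or between even-indexed sets become evicted, and the odd/even crossings $A_{2i-1}$--$A_{2j}$ remain temporary. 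These remaining cross-edges should be kept iff $i \leq j$. A downward induction on $k$ then shows that $B_k$ evicts precisely the still-temporary crossings $A_{2k-1}$--$A_{2j}$ with $j < k$ (both endpoints in $B_k$, consistent with $\min$-index $2j$ even), definitively keeps the still-temporary crossings $A_{2i-1}$--$A_{2k}$ with $i < k$ (both endpoints outside $B_k$, consistent with $\min$-index $2i-1$ odd), and leaves $A_{2k-1}$--$A_{2k}$ temporary; all other still-temporary edges are crossing with respect to $B_k$ and hence untouched.

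For the backward direction (refinable $\Rightarrow$ keepable), I construct prices directly. Given refinements $B_1, \ldots, B_b$ yielding $S$, let $f(u,k)=1$ if $u\in B_k$ and $0$ otherwise, define $\sigma(u) := \sum_{k=1}^b f(u,k)\cdot 2^{b-k} \in \{0,\dots,2^b-1\}$, and set $p(u) := (2\sigma(u)+1)/2^{b+1}$. Note that the fate of an edge $uv$ is determined by the smallest index $k^*$ where $f(u,k^*)=f(v,k^*)$: if no such $k^*$ exists, then $\sigma(u)$ and $\sigma(v)$ are bit-complementary, so $\sigma(u)+\sigma(v)=2^b-1$ and $p(u)+p(v)=1$ (edge stays temporary, hence in $S$); if $k^*$ exists with common value $0$, a bitwise accounting shows $\sigma(u)+\sigma(v) \leq 2^b-2$, so $p(u)+p(v)<1$ (definitively kept); and if the common value is $1$, then $\sigma(u)+\sigma(v) \geq 2^b$, so $p(u)+p(v)>1$ (evicted). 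Hence $E_p = S$, which proves that $S$ is keepable.

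The main obstacle is the forward direction: a single refinement globally affects \emph{all} still-temporary edges, so one cannot naively peel off one set $A_i$ at a time. The key idea is to decouple the problem via $B_0$ into a pure parity split (handling in one shot every edge whose endpoints lie in two sets of the same parity) plus a bipartite staircase on odd/even crossings, which can then be dispatched row by row by refinements that evict exactly one ``row'' and definitively keep exactly one ``column'' without interfering with already-decided edges. The backward direction is routine once one observes that the chosen price is, up to normalization, the natural encoding of the bit string $\sigma(u)$, and that the boundary case $p(u)+p(v)=1$ corresponds precisely to bit-complementarity, i.e., to the edge never being on the same side of any $B_k$.
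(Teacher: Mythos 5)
Your proof is correct, and both directions take a genuinely different route from the paper's. For the implication ``refinement sequence $\Rightarrow$ keepable'', the paper never writes down prices: it maintains an ordered partition as an invariant, showing that a refinement by $B$ replaces each pair $A_{2i-1},A_{2i}$ by the four sets $A_{2i-1}\setminus B$, $A_{2i}\cap B$, $A_{2i-1}\cap B$, $A_{2i}\setminus B$ while reproducing the edge labels, and then invokes Corollary~\ref{cor:partition}. Your binary-encoding prices $p(u)=(2\sigma(u)+1)/2^{b+1}$ prove the same implication directly from the definition of keepable, bypassing the partition characterization entirely; the key point --- that an edge's fate is decided by the first index at which the two membership bits agree, and that the sum $\sigma(u)+\sigma(v)$ detects exactly this (equal to $2^b-1$ when they never agree, at most $2^b-2$ when the first agreement is on $0$, at least $2^b$ when it is on $1$) --- checks out in all three cases. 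For the converse, the paper pads the partition to length a power of two and peels off one level of a recursive halving per refinement, whereas you perform a single parity split $B_0$ followed by a downward sweep $B_m,\dots,B_1$; this also works, up to one cosmetic inaccuracy: at step $B_k$ the still-temporary diagonal edge between $A_{2k+1}$ and $A_{2k+2}$ has both endpoints outside $B_k$ and is therefore definitely kept rather than ``untouched'', but since such edges belong to $S$ in any case this does not affect the conclusion. The paper's route keeps both directions inside a single explicit correspondence between refinement sequences and ordered partitions; yours buys a self-contained, explicit price formula and a shorter case analysis at the cost of treating the two directions by unrelated constructions.
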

\begin{proof}
We start by proving the backwards direction of the claim and use the keepable edge set characterization of Corollary~\ref{cor:partition}.
%    For one direction of the proof, 
We have to show how a refinement by a vertex set $B$ can update the ordered partition, such that the edge labels are as required.
 For a given ordered partition $A_1,\ldots,A_\ell$, we label edges as follows. For an edge $uv\in E$, let $j,k$ be such that $u\in A_j, v\in A_k$. Without loss of generality assume $j\leq k$. If $j$ is even, the edge is \emph{evicted}. If $j$ is odd and $k=j+1$, the edge is \emph{temporarily kept}. But if $j$ is odd and $k=j$ or $k\geq j+2$, the edge is \emph{definitely kept}.  In the right side of Figure~\ref{fig:partition}, dotted edges are evicted, thin solid edges temporarily kept and thick solid edges definitely kept.

    There is one special case to the above defined edge labels. Before the very first refinement, we have the trivial partition $A_1=V$, but we label all edges as temporarily kept.
    
    After the first refinement by set $B$, we have the partition $A_1=V\setminus B, A_2=B$. The edges are labeled as required: Within $A_1$ edges are definitely kept, between $A_1$ and $A_2$ they are temporarily kept and edges within $A_2$ are evicted.

    Now, let $A_1,\ldots,A_\ell$ be an ordered partition obtained by some refinements done in the past, and consider a refinement by a vertex set $B$. We update the partition as follows. Assume $\ell$ is even. For every $1\leq i\leq \ell/2$, we replace the 2 sets $A_{2i-1},A_{2i}$ by the 4 sets $A_{2i-1}\setminus B, A_{2i}\cap B, A_{2i-1}\cap B, A_{2i}\setminus B$ (in this order), see Figure~\ref{fig:refinement}.

    Let $uv\in E$ be an arbitrary edge, and $j,k$ such that $u\in A_j, v\in A_k$. We assume $j\leq k$ without loss of generality. Let $j',k'$ be the index of the sets to which respectively $u,v$ belong after the replacement. 

    Observe that $j$ and $j'$ have the same parity, and so have $k$ and $k'$.

    This already implies that if $uv$ was evicted, then $j$ is even, and so is $j'$, hence $uv$ is still evicted.

    If $uv$ was definitely kept, then $j$ was odd and $k$ was either $j$ or at least $j+2$. In case  $k=j$, then after the replacement we have either $k'=j'$ in case both endpoints or no endpoint belongs to $B$ or $k'=j'+2$ in case exactly one endpoint belongs to $B$. In both cases the edge is still definitely kept. Similarly if $k\geq j+2$, then after the replacement we still have $k'\geq j'+2$, and as a result the edge is still definitely kept.

    If $uv$ was temporarily kept, then $j$ is odd and $k=j+1$. We have to consider 4 cases, depending whether $u$ belongs to $B$ and whether $v$ belongs to $B$. The cases are depicted in Figure~\ref{fig:refinement}.

    In case both endpoints belong to $B$ (edge $gh$ in Figure~\ref{fig:refinement}), we have $j'=k'+1$. Hence before the replacement $u$ belonged to a lower indexed set, and now $v$ belongs to a smaller indexed set. And since this later index is odd, the edge is evicted.

    In case no endpoint belongs to $B$ (edge $ab$ in Figure~\ref{fig:refinement}), we have $k'=j'+3$. As a result the edge is definitely kept.

    In the other two remaining cases (edges $cd$ and $ef$ in Figure~\ref{fig:refinement}), $k'=j'+1$, and the edge remains temporarily kept.

    This establishes that the update of the ordered partition leads to the required edge labeling. Therefore the set of kept edges is keepable.

    It remains to show the other direction of the proof. Given an ordered partition, we show that there is a sequence of vertex sets, which lead by refinements to this partition. It is sufficient to show a single step. Namely, given an ordered partition $C_1,\ldots,C_\ell$ (assuming without loss of generality $\ell$ to be a power of 2), we can construct an ordered partition $A_1,\ldots,A_{\ell/2}$ and a vertex set $B$ such that refining $A$ by $B$ results in $C$. Simply define $B = \bigcup_{i=1}^{\ell/4} C_{4i-2}\cup C_{4i-1}$, and $A_i = C_{2i-1}\cup C_{2i}$. Inspecting the definition of a refinement proves the claim. \myqed
\end{proof}

\section{Upper bound}\label{sec:upper}

In this section, we prove an improved upper bound for the competitive ratio of our problem. To this end, we make use of our characterization of keepable edge sets via alternating walks (Lemma~\ref{lem:alternating}). 

\begin{proposition}
Let $G$ be the Peterson graph. The size of a minmax matching in $G$ is $3$, and there exist no prices $p: E \rightarrow [0,1]$ such that the size of a minmax matching in $G_p$ is larger than $3$.
Thus, the competitive ratio of $G$ is $3/5$. 
\end{proposition}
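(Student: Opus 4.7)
The plan has two parts. First, I verify $\mu^*(P) = 3$. Identifying $P$ with the Kneser graph $K(5,2)$ (vertices are the 2-subsets of $[5] := \{1,2,3,4,5\}$, edges connect disjoint pairs), the Erd\H{o}s--Ko--Rado theorem shows that $\alpha(P) = 4$, attained exactly by the five stars $U_i := \{A \in \binom{[5]}{2} : i \in A\}$. Since the unmatched vertices of any maximal matching form an independent set, every maximal matching has size at least $(10-4)/2 = 3$; the explicit matchings constructed next realize this bound, so $\mu^*(P)=3$ (and $\nu(P)=5$ since $P$ is $3$-regular on $10$ vertices and admits a perfect matching, yielding the ratio $3/5$).

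Second, I show $\mu^*(G_p) \leq 3$ for every pricing $p$. For each $i \in [5]$, the complement $V \setminus U_i = \binom{[5]\setminus\{i\}}{2}$ induces $K(4,2)$, which is a perfect matching $M_i$ consisting of $3$ edges (every $2$-subset of a $4$-element set has a unique disjoint $2$-subset). Each edge $\{A,B\}$ of $P$ satisfies $A \cup B = [5]\setminus\{i\}$ for a unique $i$, so the matchings $M_1,\ldots,M_5$ partition $E(P)$. Moreover, each $M_i$ is a maximal matching of $P$ of size $3$, since its unmatched set $U_i$ is independent. If $M_i \subseteq E_p$ for some $i$, then the same $M_i$ remains a maximal matching of $G_p$ of size $3$ (its complement $U_i$ stays $G_p$-independent), giving $\mu^*(G_p) \le 3$.

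The remaining case is when $E \setminus E_p$ contains a transversal of the partition $\{M_1,\ldots,M_5\}$: one evicted edge per $M_i$. The plan is to argue that even then one can find a size-$\le 3$ maximal matching in $G_p$---by choosing a different $G_p$-independent set of size at least $4$ whose complement induces a perfect matching in $E_p$---or, failing that, to construct an alternating walk in $E_p$ from the evicted transversal together with $E_p$-edges, contradicting Lemma~\ref{lem:alternating}. The main obstacle is this case: a clean combinatorial analysis of how the evicted transversal interacts with $E_p$. The $S_5$-symmetry of $P$ acting by relabeling of $[5]$ cuts down the number of essentially distinct configurations, and the girth-$5$ property of $P$ rules out alternating walks of length $4$, restricting the shortest alternating walks to traverse $6$-cycles. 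Producing in each resulting configuration either the desired maximal matching of size at most $3$ in $G_p$, or an explicit alternating walk in $E_p$, is the delicate core of the argument.
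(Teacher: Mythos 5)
Your first part is correct and complete: the Kneser-graph identification, $\alpha(P)=4$ via Erd\H{o}s--Ko--Rado, the bound $|M|\geq(10-4)/2=3$ for any maximal matching $M$, and the explicit size-$3$ maximal matchings $M_i$ together establish $\mu^*(P)=3$ and $\nu(P)=5$. The partition of $E(P)$ into the five maximal matchings $M_1,\ldots,M_5$ (indexed by the element of $[5]$ missing from $A\cup B$) is also correct, and the resulting reduction --- either some $M_i$ survives in $E_p$ and remains a size-$3$ maximal matching because its uncovered set $U_i$ is independent, or $E\setminus E_p$ contains a transversal of $\{M_1,\ldots,M_5\}$ --- is a genuinely structural observation that the paper does not make.

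However, the second case is exactly where the content of the proposition lies, and you do not prove it: you state a \emph{plan} (find another large independent set whose complement is perfectly matched in $E_p$, or exhibit an alternating walk in the sense of Lemma~\ref{lem:alternating}), acknowledge it as ``the delicate core,'' and stop. This is a genuine gap, not a routine verification. One must rule out \emph{every} keepable $E_p$ whose complement hits all five $M_i$, that still contains a matching of size at least $4$, and in which no maximal matching of size at most $3$ exists; the transversal alone has $3^5$ minimal instantiations before quotienting by $S_5$, and $E\setminus E_p$ need not be a minimal transversal. The paper itself does not carry out this analysis by hand: it encodes keepability (via the alternating-cycle constraints over the $25$ even cycles), the existence of a size-$4$ matching, and the non-extendability condition as a $0$--$1$ integer program with $250$ variables and $431$ constraints, and relies on a solver's infeasibility certificate. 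So your approach differs from the paper's (structural reduction versus exhaustive computer search), and the reduction would shrink the search space, but as written the argument establishes only the easy direction plus the first part of the statement; the claim $\mu^*(G_p)\leq 3$ for all $p$ remains unproved in the transversal case.
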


\begin{proof}
We show that on the Petersen graph, the ratio is 3/5. In this graph, there is an optimum matching of size 5 (see Figure~\ref{fig:petersen-matching}), and we show that the algorithm cannot select a keepable subset of the edges, such that in the induced subgraph the minmax matching has size 4.

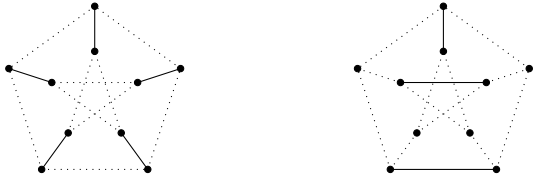
\begin{figure}
    \centering
    \begin{tikzpicture}[scale=0.6]
\draw[fill] (234:2cm) circle (2pt);
\draw[fill] (234:1cm) circle (2pt);
\draw[fill] (306:2cm) circle (2pt);
\draw[fill] (306:1cm) circle (2pt);
\draw[fill] (18:2cm) circle (2pt);
\draw[fill] (18:1cm) circle (2pt);
\draw[fill] (90:2cm) circle (2pt);
\draw[fill] (90:1cm) circle (2pt);
\draw[fill] (162:2cm) circle (2pt);
\draw[fill] (162:1cm) circle (2pt);
\draw[dotted] (234:2cm) -- (306:2cm);
\draw[dotted] (306:2cm) -- (18:2cm);
\draw[dotted] (18:2cm) -- (90:2cm);
\draw[dotted] (90:2cm) -- (162:2cm);
\draw[dotted] (234:2cm) -- (162:2cm);
\draw[dotted] (234:1cm) -- (18:1cm);
\draw[dotted] (306:1cm) -- (90:1cm);
\draw[dotted] (18:1cm) -- (162:1cm);
\draw[dotted] (234:1cm) -- (90:1cm);
\draw[dotted] (306:1cm) -- (162:1cm);
\draw[color=black] (234:2cm) -- (234:1cm);
\draw[color=black] (306:2cm) -- (306:1cm);
\draw[color=black] (18:2cm) -- (18:1cm);
\draw[color=black] (90:2cm) -- (90:1cm);
\draw[color=black] (162:2cm) -- (162:1cm);
\end{tikzpicture}
\hspace{2cm}
\begin{tikzpicture}[scale=0.6]
\draw[fill] (234:2cm) circle (2pt);
\draw[fill] (234:1cm) circle (2pt);
\draw[fill] (306:2cm) circle (2pt);
\draw[fill] (306:1cm) circle (2pt);
\draw[fill] (18:2cm) circle (2pt);
\draw[fill] (18:1cm) circle (2pt);
\draw[fill] (90:2cm) circle (2pt);
\draw[fill] (90:1cm) circle (2pt);
\draw[fill] (162:2cm) circle (2pt);
\draw[fill] (162:1cm) circle (2pt);
\draw[dotted] (306:2cm) -- (18:2cm);
\draw[dotted] (18:2cm) -- (90:2cm);
\draw[dotted] (90:2cm) -- (162:2cm);
\draw[dotted] (234:2cm) -- (162:2cm);
\draw[dotted] (234:2cm) -- (234:1cm);
\draw[dotted] (306:2cm) -- (306:1cm);
\draw[dotted] (18:2cm) -- (18:1cm);
\draw[dotted] (162:2cm) -- (162:1cm);
\draw[dotted] (234:1cm) -- (18:1cm);
\draw[dotted] (306:1cm) -- (90:1cm);
\draw[dotted] (234:1cm) -- (90:1cm);
\draw[dotted] (306:1cm) -- (162:1cm);
\draw[color=black] (90:2cm) -- (90:1cm);
\draw[color=black] (18:1cm) -- (162:1cm);
\draw[color=black] (234:2cm) -- (306:2cm);
\end{tikzpicture}
    \caption{A maximum and a minmax matching in the Petersen graph.}
    \label{fig:petersen-matching}
\end{figure}

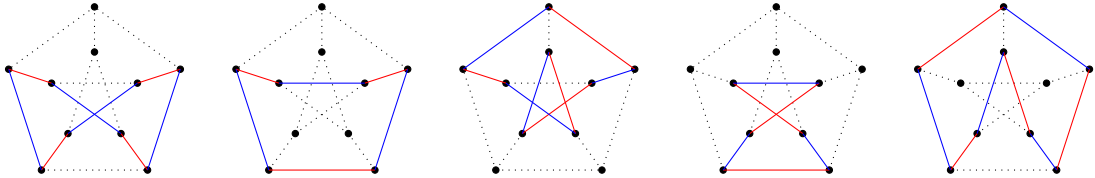
\begin{figure}[ht]
    \centering
\begin{tikzpicture}[scale=0.6]
\draw[fill] (234:2cm) circle (2pt);
\draw[fill] (234:1cm) circle (2pt);
\draw[fill] (306:2cm) circle (2pt);
\draw[fill] (306:1cm) circle (2pt);
\draw[fill] (18:2cm) circle (2pt);
\draw[fill] (18:1cm) circle (2pt);
\draw[fill] (90:2cm) circle (2pt);
\draw[fill] (90:1cm) circle (2pt);
\draw[fill] (162:2cm) circle (2pt);
\draw[fill] (162:1cm) circle (2pt);
\draw[dotted] (234:2cm) -- (306:2cm);
\draw[dotted] (18:2cm) -- (90:2cm);
\draw[dotted] (90:2cm) -- (162:2cm);
\draw[dotted] (90:2cm) -- (90:1cm);
\draw[dotted] (306:1cm) -- (90:1cm);
\draw[dotted] (18:1cm) -- (162:1cm);
\draw[dotted] (234:1cm) -- (90:1cm);
\draw[color=red] (234:2cm) -- (234:1cm);
\draw[color=red] (306:2cm) -- (306:1cm);
\draw[color=red] (18:2cm) -- (18:1cm);
\draw[color=red] (162:2cm) -- (162:1cm);
\draw[color=blue] (306:2cm) -- (18:2cm);
\draw[color=blue] (234:2cm) -- (162:2cm);
\draw[color=blue] (234:1cm) -- (18:1cm);
\draw[color=blue] (306:1cm) -- (162:1cm);
\end{tikzpicture}
\hspace{1em}
\begin{tikzpicture}[scale=0.6]
\draw[fill] (234:2cm) circle (2pt);
\draw[fill] (234:1cm) circle (2pt);
\draw[fill] (306:2cm) circle (2pt);
\draw[fill] (306:1cm) circle (2pt);
\draw[fill] (18:2cm) circle (2pt);
\draw[fill] (18:1cm) circle (2pt);
\draw[fill] (90:2cm) circle (2pt);
\draw[fill] (90:1cm) circle (2pt);
\draw[fill] (162:2cm) circle (2pt);
\draw[fill] (162:1cm) circle (2pt);
\draw[dotted] (18:2cm) -- (90:2cm);
\draw[dotted] (90:2cm) -- (162:2cm);
\draw[dotted] (234:2cm) -- (234:1cm);
\draw[dotted] (306:2cm) -- (306:1cm);
\draw[dotted] (90:2cm) -- (90:1cm);
\draw[dotted] (234:1cm) -- (18:1cm);
\draw[dotted] (306:1cm) -- (90:1cm);
\draw[dotted] (234:1cm) -- (90:1cm);
\draw[dotted] (306:1cm) -- (162:1cm);
\draw[color=red] (234:2cm) -- (306:2cm);
\draw[color=red] (18:2cm) -- (18:1cm);
\draw[color=red] (162:2cm) -- (162:1cm);
\draw[color=blue] (306:2cm) -- (18:2cm);
\draw[color=blue] (234:2cm) -- (162:2cm);
\draw[color=blue] (18:1cm) -- (162:1cm);
\end{tikzpicture}
\hspace{1em}
\begin{tikzpicture}[scale=0.6]
\draw[fill] (234:2cm) circle (2pt);
\draw[fill] (234:1cm) circle (2pt);
\draw[fill] (306:2cm) circle (2pt);
\draw[fill] (306:1cm) circle (2pt);
\draw[fill] (18:2cm) circle (2pt);
\draw[fill] (18:1cm) circle (2pt);
\draw[fill] (90:2cm) circle (2pt);
\draw[fill] (90:1cm) circle (2pt);
\draw[fill] (162:2cm) circle (2pt);
\draw[fill] (162:1cm) circle (2pt);
\draw[dotted] (234:2cm) -- (306:2cm);
\draw[dotted] (306:2cm) -- (18:2cm);
\draw[dotted] (234:2cm) -- (162:2cm);
\draw[dotted] (234:2cm) -- (234:1cm);
\draw[dotted] (306:2cm) -- (306:1cm);
\draw[dotted] (90:2cm) -- (90:1cm);
\draw[dotted] (18:1cm) -- (162:1cm);
\draw[color=red] (18:2cm) -- (90:2cm);
\draw[color=red] (162:2cm) -- (162:1cm);
\draw[color=red] (234:1cm) -- (18:1cm);
\draw[color=red] (306:1cm) -- (90:1cm);
\draw[color=blue] (90:2cm) -- (162:2cm);
\draw[color=blue] (18:2cm) -- (18:1cm);
\draw[color=blue] (234:1cm) -- (90:1cm);
\draw[color=blue] (306:1cm) -- (162:1cm);
\end{tikzpicture}
\hspace{1em}
\begin{tikzpicture}[scale=0.6]
\draw[fill] (234:2cm) circle (2pt);
\draw[fill] (234:1cm) circle (2pt);
\draw[fill] (306:2cm) circle (2pt);
\draw[fill] (306:1cm) circle (2pt);
\draw[fill] (18:2cm) circle (2pt);
\draw[fill] (18:1cm) circle (2pt);
\draw[fill] (90:2cm) circle (2pt);
\draw[fill] (90:1cm) circle (2pt);
\draw[fill] (162:2cm) circle (2pt);
\draw[fill] (162:1cm) circle (2pt);
\draw[dotted] (306:2cm) -- (18:2cm);
\draw[dotted] (18:2cm) -- (90:2cm);
\draw[dotted] (90:2cm) -- (162:2cm);
\draw[dotted] (234:2cm) -- (162:2cm);
\draw[dotted] (18:2cm) -- (18:1cm);
\draw[dotted] (90:2cm) -- (90:1cm);
\draw[dotted] (162:2cm) -- (162:1cm);
\draw[dotted] (306:1cm) -- (90:1cm);
\draw[dotted] (234:1cm) -- (90:1cm);
\draw[color=red] (234:2cm) -- (306:2cm);
\draw[color=red] (234:1cm) -- (18:1cm);
\draw[color=red] (306:1cm) -- (162:1cm);
\draw[color=blue] (234:2cm) -- (234:1cm);
\draw[color=blue] (306:2cm) -- (306:1cm);
\draw[color=blue] (18:1cm) -- (162:1cm);
\end{tikzpicture}
\hspace{1em}
\begin{tikzpicture}[scale=0.6]
\draw[fill] (234:2cm) circle (2pt);
\draw[fill] (234:1cm) circle (2pt);
\draw[fill] (306:2cm) circle (2pt);
\draw[fill] (306:1cm) circle (2pt);
\draw[fill] (18:2cm) circle (2pt);
\draw[fill] (18:1cm) circle (2pt);
\draw[fill] (90:2cm) circle (2pt);
\draw[fill] (90:1cm) circle (2pt);
\draw[fill] (162:2cm) circle (2pt);
\draw[fill] (162:1cm) circle (2pt);
\draw[dotted] (234:2cm) -- (306:2cm);
\draw[dotted] (18:2cm) -- (18:1cm);
\draw[dotted] (90:2cm) -- (90:1cm);
\draw[dotted] (162:2cm) -- (162:1cm);
\draw[dotted] (234:1cm) -- (18:1cm);
\draw[dotted] (18:1cm) -- (162:1cm);
\draw[dotted] (306:1cm) -- (162:1cm);
\draw[color=red] (306:2cm) -- (18:2cm);
\draw[color=red] (90:2cm) -- (162:2cm);
\draw[color=red] (234:2cm) -- (234:1cm);
\draw[color=red] (306:1cm) -- (90:1cm);
\draw[color=blue] (18:2cm) -- (90:2cm);
\draw[color=blue] (234:2cm) -- (162:2cm);
\draw[color=blue] (306:2cm) -- (306:1cm);
\draw[color=blue] (234:1cm) -- (90:1cm);
\end{tikzpicture}\caption{All 5 even length cycles up to rotations.}
    \label{fig:petersen-cycles}
\end{figure}

The proof uses a 01-integer program formulation of the problem, which we run on a solver to certify infeasibility. 

We want to decide if there is an edge set $S$, with the following properties:
\begin{itemize}
    \item $S$ is keepable.
    \item $S$ admits a matching of size $4$.
    \item Every matching of size $3$ in $S$ can be extended to a matching of size $4$.
\end{itemize}
The existence of such a set $S$ would imply that every minmax matching of the graph restricted to edge set $S$ has size at least 4. Hence if no set $S$ has these properties, then the ratio is at most $3/5$ for the Petersen graph.

We introduce a binary variable $x_e$ for every edge $e$, such that $x$ is the characteristic vector of $S$. To ensure that $S$ is keepable, we use the characterization involving alternating walks. There are at least 25 even length cycles in the Petersen graph, which are depicted in Figure~\ref{fig:petersen-cycles} up to rotations. For each cycle denote $A,B$ its edge sets taken in alternation. We need to enforce that $S$ is not alternating along the cycle, so we need to prevent $A\subseteq S$ and $B\cap S=\emptyset$ as well as $B\subseteq S$ and $A\cap S=\emptyset$. Denoting by $k=|A|=|B|$, we write the constraint
\[
    1-k \leq \sum_{e\in A}x_e - \sum_{e\in B} x_e \leq k-1.
\]
To enforce that $S$ admits a matching of size $4$ we introduce variables $y_M$ for every matching $M$ of size $4$. We require $\sum y_M\geq 1$ and add constraints such that $y_M=1$ implies $M\subseteq S$. Formally for every matching $M$ of size $4$ we add the constraint
\[
    \sum_{e\in M} x_e \geq 4 y_M.
\]
Finally we introduce for every matching $M$ of size $3$ a variable $z_M$ with the following interpretation. If $M\subseteq S$ then $z_M=1$. So we add the following constraints
\begin{align*}
    \sum_{e\in M} x_e &\leq 2 + z_M \\
    \sum_{e} x_e &\geq z_M,
\end{align*}
where the second sum ranges over all edges $e\not\in M$ such that $M\cup\{e\}$ is a matching.

We used a standard integer linear programming solver to show that the above described linear program has no integer solution.  The instance is so small (250 variables, 431 constraints) that the solver GLPK gives the answer within a second. The source code can be found at \url{https://www.lip6.fr/Christoph.Durr/pricing/petersen.py}.

Note that it is not necessary to show that the Petersen graph contains only those 25 even length cycles, depicted in figure~\ref{fig:petersen-cycles}. Indeed, if we would have missed some walks, our linear program would be a relaxation. And this does not weaken the fact that the solver certifies infeasibility. \myqed
\end{proof}

% What makes the Petersen graph so special? In a $d$-regular triangle free graph of $n$ vertices, every matching edge is adjacent to exactly $2(d-1)$ non-matching edges. So if the matching has size $k$, then $2k(d-1)\geq n-k$, and therefore $n/(2d-1)$ is a lower bound on any matching. For the Petersen graph, the lower bound is tight.  Also the graph has a lot of symmetry, and all 5 minmax matchings are rotations of the one illustrated in Figure~\ref{fig:petersen-matching}. Moreover they partition the edges.

% \subsection{Lower bound for restricted prices}

\section{Lower bound}
\label{sec:P3}

Let $G=(V,E)$ be the given graph and $M^*$ a maximum matching in $G$. We know that the size of every minmax matching in $G$  is at least $|M^*|/2$. Moreover, there are graphs for which this inequality is tight, for example the path of length 3.  In this section we show that it is always possible to restrict the graph to a particular keepable edge set and obtain a minmax matching of strictly larger size.

\begin{proposition}
For every graph $G=(V,E)$, there exist prices $p:E \rightarrow [0,1]$ such that the size of a minmax matching in $G_p$ is at least $(1/2 + 2/n)|M^*|$, where $M^*$ is a maximum matching in $G$. 
%There is a keepable set $S$, such that the size of a minmax matching in $H=(V,S)$ is is at least  $|M^*|/2 + 1$. 
\end{proposition}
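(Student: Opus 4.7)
The plan is to proceed by induction on $n$ and to construct a keepable edge set $S$ using the ordered partition characterization (Corollary~\ref{cor:partition}). Set $m := |M^*|$. If $m \leq 1$, the bound $(1/2 + 2/n)m \leq 1$ is trivially met by $S = E$ (whose minmax is at least $m$), so I assume $m \geq 2$.

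For the inductive step, I would pick an edge $e^* = u^*v^* \in M^*$ and prepend the blocks $A_1 = \{u^*\}$ and $A_2 = \{v^*\}$ to the ordered partition of $V \setminus \{u^*, v^*\}$ obtained by applying the induction hypothesis to the subgraph $G' := G[V \setminus \{u^*, v^*\}]$ (which has $n-2$ vertices and a maximum matching of size at least $m - 1$). By Corollary~\ref{cor:partition}, the combined partition yields a keepable set $S$ such that (i) $e^* \in S$, (ii) every edge from $u^*$ to $V \setminus \{v^*\}$ is in $S$, (iii) no edge from $v^*$ to $V \setminus \{u^*, v^*\}$ is in $S$, and (iv) edges within $V \setminus \{u^*, v^*\}$ coincide with the inductive keepable set $S'$ on $G'$. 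In particular, in $(V, S)$ the vertex $v^*$ has degree exactly $1$.

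Given any maximal matching $N$ of $(V, S)$, either (a) $e^* \in N$, or (b) $u^*$ is matched to some $u^{*'} \neq v^*$ leaving $v^*$ uncovered. In case (a), $N \setminus \{e^*\}$ is maximal in $(V \setminus \{u^*, v^*\}, S')$ by an argument tracing through maximality in $(V,S)$, so $|N| \geq 1 + (1/2 + 2/(n-2))(m-1)$ by induction. A direct computation, which reduces to the inequality $n^2 - 6n + 8m \geq 0$ (valid for $n \geq 4$, $m \geq 2$), then yields $1 + (1/2 + 2/(n-2))(m-1) \geq (1/2 + 2/n)m$, as required.

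The main obstacle I foresee is case (b): here the relevant sub-problem lives on $G[V \setminus \{u^*, u^{*'}\}]$, whose maximum matching can drop to $m - 2$ (if $u^{*'}$ was itself matched in $M^*$), and the analogous inductive bound then falls short of the target by roughly $2/(n-2)$ times one unit. I would resolve this by either (i) choosing $e^*$ carefully so that every neighbor of $u^*$ other than $v^*$ is unmatched in $M^*$ (exploiting the maximality of $M^*$ and the absence of augmenting paths), which ensures the remaining max matching stays at $m-1$, or (ii) applying a second refinement via Lemma~\ref{lem:refinement} that further restricts $S$ so that every maximal matching of $(V, S)$ is forced to contain $e^*$. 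Obtaining a clean, uniform argument for case (b) is the core technical challenge.
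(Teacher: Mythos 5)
Your inductive approach via Corollary~\ref{cor:partition} is genuinely different from the paper's proof, but it has a gap that you yourself identify and do not close: case~(b). The paper avoids this kind of local case analysis entirely. It runs a global \emph{refinement algorithm} (via Lemma~\ref{lem:refinement}): compute a minmax matching $M$ of the current kept graph; if $|M|\le|M^*|/2$, then $M\triangle M^*$ decomposes into alternating paths of length $3$ whose inner vertices are exactly the $M$-matched vertices $B$; refining by $B$ evicts all of $M$ while leaving every $M^*$-edge with exactly one endpoint in $B$, hence kept; repeat. Termination follows because each iteration evicts $\ge|M^*|/2$ edges, and at termination the minmax matching has size $\ge|M^*|/2+1\ge(1/2+2/n)|M^*|$. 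The crucial structural fact — that when the minmax matching is as small as $|M^*|/2$ the symmetric difference consists only of length-$3$ paths, so that a single vertex set $B$ simultaneously kills $M$ and spares $M^*$ — is exactly what substitutes for your missing case~(b) argument.

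Concretely, neither of your proposed repairs works as stated. Fix~(i), choosing $e^*=u^*v^*$ so that every neighbor of $u^*$ other than $v^*$ is unmatched in $M^*$, is impossible in general: in any graph with a perfect matching (e.g.\ the Petersen graph) every neighbor of every vertex is matched, so no such edge exists unless $u^*$ has degree~$1$. Already on the path $a$--$b$--$c$--$d$ with $M^*=\{ab,cd\}$, choosing $u^*=b$, $v^*=a$ keeps $\{ab,bc,cd\}$ and the minmax matching collapses to $\{bc\}$ of size $1<(1/2+2/4)\cdot 2$; so the orientation and choice of $e^*$ cannot be arbitrary, and you give no rule that works uniformly. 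Fix~(ii), forcing $e^*$ into every maximal matching, would require evicting all edges at $u^*$ and $v^*$ other than $e^*$; but a refinement by a set $B$ containing $u^*$ and its other neighbors also evicts the edges among those neighbors (possibly including edges of $M^*$), and it is not shown that such an isolated-$e^*$ configuration is jointly keepable with the inductively constructed $S'$ — keepability is a global condition (no alternating walk), not one you can impose edge by edge. Until case~(b) is resolved by a concrete, uniformly applicable construction, the proof is incomplete.
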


%\begin{proposition}
%There is a keepable set $S$, such that the size of a minmax matching in $H=(V,S)$ is is at least  $|M^*|/2 + 1$. 
%\end{proposition}

There are two weaknesses of this result. First, we could not strengthen it to a constant lower bound. Second, even if the proof is constructive, it needs to compute multiple minmax matchings. Since this is an NP-hard problem, it raises the question if such prices can be computed in polynomial time.

\begin{figure}[t]
    \centering
\begin{tikzpicture}[every circle node/.style={draw}]
\draw[fill=gray!20,rounded corners=10] (2.4,0.5) rectangle (5.1, 6);
\node at (3.75, 5.6) {$B$};
\foreach \r/\c/\x [count=\i] in {2/3/a, 1/3/b, 3/3/c, 4/3/d, 2/2/e, 1/2/f, 3/2/g, 4/2/h, 2/1/i, 1/1/j, 3/1/k, 4/1/l}{
	\node (u\i) at (\r*1.5,\c*1.5) {$\x$};
}
\draw[thick, double] (u1) -- (u2);
\draw[thick, double] (u3) -- (u4);
\draw[thick, double] (u5) -- (u6);
\draw (u11) -- (u7) -- (u1) -- (u6);
\draw (u5) -- (u4) -- (u7);
\draw[thick, double] (u7) -- (u8);
\draw (u8) -- (u3);
\draw[thick, double] (u10) -- (u9);
\draw[thick, double] (u11) -- (u12);
\draw (u12) -- (u7);
\draw (u2)  to [bend left] (u3);
\draw[dotted] (u1) -- (u3);
\draw[dotted] (u5) -- (u7);
\draw[dotted] (u9) -- (u11);
\draw (u7)  to [bend left] (u6);
\draw (u11)  to [bend left] (u10);
\end{tikzpicture}
    \caption{Illustration of the algorithm in Section~\ref{sec:P3}. Matching $M^*$ is represented by double lines, and matching $M$ by dotted lines.}
    \label{fig:refinement_algo}
\end{figure}
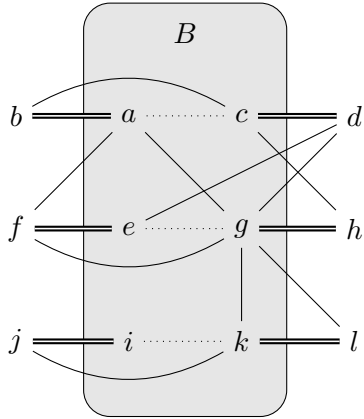

\begin{proof}
Instead of constructing prices directly, we define the following \emph{refinement algorithm}, which makes use of our characterization of keepable edge sets from Lemma~\ref{lem:refinement}.
The algorithm starts by labeling all edges to be temporarily kept and fixes a maximum matching $M^*$. We will later show that, during the entire execution of the algorithm, $M^* \subseteq S$, where $S$ is defined as in Lemma~\ref{lem:refinement}, i.e., as the set of temporarily kept and definitely kept edges. The algorithm repeats the following steps, see Figure~\ref{fig:refinement_algo} for an example. 
\begin{itemize}
    \item Compute a minmax matching $M$ for the graph $H=(V,S)$.
    \item If $M>|M^*|/2$, we are done and stop the procedure.
    \item Otherwise, consider the symmetric difference between $M$ and $M^*$. It consists of a collection of paths of length 3, each consisting of two inner nodes and two endpoints. Let \(B\) be the set of inner nodes of these paths, which are exactly all vertices matched by $M$.
\item Refine the edge labels by $B$, which results in evicting from $S$ all edges with both endpoints in $B$. In particular, all edges in $M$ are evicted.
\end{itemize}
By  Lemma~\ref{lem:refinement}, $S$ is keepable at all times. Moreover, in every iteration, every edge has at least one endpoint in $B$, simply because $M$ is a \emph{maximal} matching. As a result, at any moment, edges are either evicted or temporarily kept, but never definitely kept.

We now argue that $M^* \subseteq S$ during the entire execution of the algorithm. In every iteration, $B$ consists of the inner nodes of length of paths $3$, where edges in $M^*$ form the outer edges. Hence, each edge in $M^*$ has exactly one endpoint in $B$ and therefore remains temporarily kept.

Hence, any minmax matching in $H$ has size at least $|M^*|/2$, and, after termination, has size at least $|M^*|/2 + 1 \geq (1/2 + 2/n) |M^*|$. 

It remains to argue that the algorithm terminates eventually. Since the algorithm evicts a minmax matching of size at least $|M^*|/2$ in each iteration, the procedure terminates after $O(\frac{|V|^2}{|M^*|})$ iterations. \myqed 
\end{proof}

\begin{figure}[htb]
    \centering
\begin{tikzpicture}
\foreach \x/\y/\v in {0/0/G, 2/0/H, 4/0/I, 6/0/J, 8/0/K, 10/0/L,
                      0/10/g, 2/10/h, 4/10/i, 6/10/j, 8/10/k, 10/10/l,
                      2/2/D, 4/2/E, 8/2/F, 2/8/d, 4/8/e, 8/8/f,
                      2/4/A, 4/4/B, 8/4/C, 2/6/a, 4/6/b, 8/6/c}{
	\node (\v) at (\x/2,\y/2) {$\v$};
 }
 \foreach \u/\v in {a/d, A/D, b/e, B/E, c/f, C/F, g/h, G/H, i/j, I/J, k/l, K/L}{
    \draw[thick, double] (\u) -- (\v);
 }
 \foreach \u/\v in {a/A, b/B, c/C}{
    \draw[thick, dotted] (\u) -- (\v);
 }
 \foreach \u/\v in {h/i, H/I, j/k, J/K}{
    \draw[thick, decorate, decoration={snake,amplitude=.4mm,segment length=2mm}] (\u) -- (\v);
 }
 \foreach \u/\v in {d/h, D/H, e/i, E/I, f/k, F/K}{
    \draw[thick] (\u) -- (\v);
 }
\end{tikzpicture}
    \caption{The unique maximum matching $M^*$ is depicted by double lines. In the first iteration, the algorithm receives a minmax matching $M$ of size $7$, depicted by $4$ waved and $3$ dotted lines. The algorithm then refines by the set $\{a,b,c,A,B,C\}$. In the second iteration, the algorithm receives a minmax matching of size $6$, depicted by solid lines, which are all definitely kept by then. The minmax ratio of the resulting graph is $1/2$.}
    \label{fig:refine_too_much}
\end{figure}
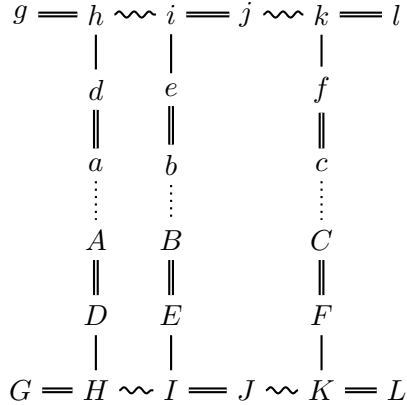

The presented algorithm stops as soon as $M>|M^*|/2$ holds. One might be wondering whether the algorithm could continue evicting edges as long as possible. 
%, as long as there are paths of length 3 in the symmetric difference $M\Delta M^*$ with temporarily kept inner edges. 
More precisely, the algorithm would still refine by the inner nodes of paths of length three in the symmetric difference $M\Delta M^*$, while it would ignore longer alternating paths. Let's call this the \emph{extended refinement algorithm}.

Unfortunately, this algorithm is not guaranteed to provide an improved minmax ratio. While the refinement strategy would preserve the optimal matching $M^*$, maybe surprisingly, the size of a minmax matching can even decrease to $1/2 |M^*|$.
We discuss such an example in Figure~\ref{fig:refine_too_much}.

\section{Final remarks}

We studied a matching problem which is motivated by posted price mechanisms for online combinatorial auctions. In particular, we were interested in the competitive ratio of our problem, i.e., the ratio between a maximum matching in the original graph, and the maximum size of a minmax matching in any price-induced subgraph. To this end, we gave four characterizations of subgraphs that can be induced by pricing schemes. Building upon these, we were able to reduce the range of the competitive ratio from $[1/2,2/3]$ to $[1/2+2/n,3/5]$.

As the problem is far from being solved, we see many interesting directions for future work. Besides the remaining gap between the lower and upper bound, we remark that the competitive ratio for bipartite graphs could be anywhere between $1/2+2/n$ and $2/3$, as the Peterson graph, which we used for our upper bound, is not bipartite. 

%Determining the competitive ratio of the problem is wide open. It could be anywhere between 1/2 and 3/5. For bipartite graphs it could even be anywhere between 1/2 and 2/3. Indeed the Petersen graph we use in the lower bound construction is not bipartite.

%generalizing characterization to problem with distinct budgets and to 

Moreover, our motivation from combinatorial auctions leads to a range of interesting questions in more general settings. For example, can our characterizations be generalized to the case when buyers have different budgets or when buyers are interested in bundles of size larger than $2$? 

%comment and complexity question 
While we focused on understanding the \emph{existence} of pricing schemes with good competitive ratio, an orthogonal question is whether an optimal pricing scheme, i.e., one that maximizes the size of induced minmax matchings (per instance), can always be found in polynomial time, and if not, how good can it be approximated? We remark that many related problems in the literature are NP-hard and even hard to approximate.

%In this paper we compared the objective value obtained by a pricing with the maximum matching of the original graph. It would make sense to compare it with the best possible pricing. Can the best pricing be obtained in polynomial time, and if not, how good can it be approximated?
 
%It would be interesting to show hardness of approximation for our problem. 

%Generalizing our characterizations for agents with different budgets. Can we say anything about (the difficulty of) this? 

\medskip 

\paragraph{Acknowledgments.} The authors would like to thank José Correa for having introduced this problem to us, as well as for valuable discussions. We would also like to thank Moritz Buchem and Roland Vincze for interesting discussions on the problem. 

% Now that we know how to check in polynomial time if an edge set is keepable, it would be interesting to find a generalization of this problem for which this question is NP-hard. It is not clear what such a generalization would be. If buyers are interested in larger item sets and have arbitrary budgets, then this situation can be modeled by a weighted hyper-graph. But even in this case it can be checked that a collection of hyper-edges is keepable, using a straightforward linear programming formulation. 

%OTHER POTENTIAL OPEN QUESTIONS: 
%- CAN WE SHOW HARDNESS? 
%- WOULD SETTING NEGATIVE PRICES HELP? -- xtof: no, because we can replace all prices of the form 1/2+x by 1/2+x*alpha for arbitrary small alpha > 0.

% Several questions are left open. Given a graph $G=(V,E)$ is it always possible to find prices $p$ such that $G_p=(V,E_p)$ admits a minmax matching of size strictly larger than the size of the maximum matching in $G$. This question remains open even if we don't require the prices to be produced in polynomial time.

% We showed that the competitive ratio of the problem is at least 5/3, but for bipartite graphs it is still not known if the ratio is larger than 3/2 or not.

\printbibliography 

\end{document}